\newcommand{\be}{\begin{equation}}
\newcommand{\ee}{\end{equation}}
\newcommand{\ba}{\begin{array}}
\newcommand{\ea}{\end{array}}
\newcommand{\bea}{\begin{eqnarray}}
\newcommand{\eea}{\end{eqnarray}}
\newcommand{\la}{\langle}
\newcommand{\ra}{\rangle}
\newcommand{\nn}{\nonumber}
\newtheorem{lemma}{Lemma}
\newtheorem{theorem}{Theorem}
\begin{document}

% title?
% SBB: I am not completely satisfied by either of these titles... any more ideas?
%\title{Local Perturbation Gadgets Reduce k-Local to 2-Local Bounded-Strength Hamiltonians}
\title{Simulation of Many-Body Hamiltonians using Perturbation Theory with Bounded-Strength Interactions}

%SBB: the command \email puts our emails at the end of the  appendix. It is better to put them on  the 1st page
% alphabetical?
\author{Sergey \surname{Bravyi}
\footnote{sbravyi@us.ibm.com}} \affiliation{IBM Watson Research
Center, P.O. Box 218, Yorktown Heights, NY, USA}
\author{David P. \surname{DiVincenzo}
\footnote{divince@watson.ibm.com}} \affiliation{IBM Watson Research
Center, P.O. Box 218, Yorktown Heights, NY, USA}
\author{Daniel \surname{Loss}
\footnote{daniel.loss@unibas.ch}} \affiliation{Department of
Physics, University of Basel, Klingelbergstrasse 82, CH-4056 Basel,
Switzerland}
\author{Barbara M. \surname{Terhal}
\footnote{bterhal@gmail.com}} \affiliation{IBM Watson Research
Center, P.O. Box 218, Yorktown Heights, NY, USA}

\date{\today}

%SBB: it is better to use the term $k$-body rather than $k$-local since $k$-body is used in the introduction.
%The phrase  "... by introducing the notion of a local Schrieffer-Wolff transformation" might suggest that the
%notion of a global  Schrieffer-Wolf transformation has been already defined, so I removed the word "local".
\begin{abstract}
We show how to map a given $n$-qubit target Hamiltonian with
bounded-strength $k$-body interactions onto a simulator Hamiltonian
with two-body interactions, such that the ground-state energy of the
target and the simulator Hamiltonians are the same up to an {\em
extensive} error $O(\epsilon n)$ for arbitrary small $\epsilon$. The
strength of interactions in the simulator Hamiltonian depends on
$\epsilon$ and $k$ but does not depend on $n$. We accomplish this
reduction using a new way of deriving an effective low-energy
Hamiltonian which relies on the Schrieffer-Wolff transformation of
many-body physics.
\end{abstract}

\pacs{03.67.Ac, 89.70.Eg, 31.15.am}

\maketitle

\section{Introduction}

In quantum field theory and quantum many-body physics perturbation
theory is omnipresent. Perturbation theory is perhaps most
important as a method for constructing an efficient description of
the physics at low energies as an effective theory of the physics
at higher energies. For systems which do not have a
perturbative treatment, it can be much harder or impossible to obtain
such a concise and efficient description. In an apparently
independent development, perturbation theory has recently appeared
as a tool in the area of quantum complexity theory
\cite{kkr:hamsiam}. Its use in this area resembles that in
theoretical physics: it relates the low-energy properties of
different Hamiltonian models. But in computer science this
association is made in the reverse direction as compared with
theoretical physics.  In many-body physics, the ``full"
Hamiltonian (including high energy degrees of freedom) is given,
and an effective Hamiltonian on the low-energy subspace is
calculated using perturbation theory. In computational complexity
theory, the effective Hamiltonian is given -- it is a ``target
Hamiltonian" $H_{\rm target}$ chosen for some computational
hardness property (for example, in adiabatic quantum
computation~\cite{ADKLLR} the ground state of $H_{\rm target}$
encodes the execution of a quantum algorithm). Then a
``high-energy" simulator Hamiltonian $H$ is to be chosen such that
$H_{\rm target}$ can be obtained from it by perturbation theory.
%SBB: explain what is the goal of applying perturbation theory if $H_{target}$ is given
The main objective is to make the simulator Hamiltonian $H$ as
simple and realistic as possible while retaining the computational
hardness of $H_{\rm target}$.
% For example, Kempe, Kitaev, and
% Regev~\cite{kkr:hamsiam} have shown how to map a given Hamiltonian
% $H_{\rm target}$ with $k$-body interactions onto a simulator
 %Hamiltonian $H$ with two-body interactions such that the
% ground-state energy of $H$ approximates the one of  $H_{target}$.
For example, the perturbation theory has been used
in~\cite{OT:qma} to prove universality of quantum adiabatic
computation with local two-body Hamiltonians on a 2D square
lattice. For more recent developments
see~\cite{BDOT:stoq,JF:gadgets,ver,Love}.

%SBB: two paragraphs are compressed into one
In all applications mentioned above, one deals with a many-body
Hamiltonian $H=H_0+V$ where $H_0$ is a simple operator with a known
spectrum and $V$ is a perturbation. The spectrum of $H_0$ consists
of two disjoint parts spanning the low-energy subspace $P$ and the
high-energy subspace $Q$. The two parts of the spectrum are
separated from each other by an energy gap $\Delta$. A perturbation
expansion provides a systematic way of constructing an effective
Hamiltonian  $H_{\rm eff}$ acting on the low-energy subspace $P$
such that the ground state energy of $H_{\rm eff}$ approximates the
one of $H$ with an arbitrarily small error. An important shortcoming
of the perturbative approach is the limited range of parameters over
which it can be rigorously justified, namely, $\|V\|<\Delta/2$, see
for instance~\cite{kkr:hamsiam,book:reed&simon}. For $||V|| \gg
\Delta$, we expect states from the high-energy subspace to mix
strongly with states from the low-energy subspace. Thus, it should
not be expected that a general perturbative expansion for $H_{\rm
eff}$ is convergent when $||V|| \gg \Delta$. (One notable exception
to this generic behavior is the case when the low-energy subspace is
one-dimensional, see~\cite{BDL07}.)

If we require for convergence that $||V|| \ll \Delta$, we force
$\Delta$ to scale with system size $n$, since $||V||$ typically
scales with $n$; this is what has been done in
\cite{kkr:hamsiam,OT:qma,BDOT:stoq,JF:gadgets,ver,Love}. Hence
these calculations would indicate that perturbation theory can
only be rigorously applied to an unphysical Hamiltonian, one for
which the gap $\Delta$ grows with the system size $n$.

If the applicability of perturbation theory were thus limited, its
common use in physics would be unwarranted. The lack of convergence
of the perturbative series for quantum electrodynamics was argued on
physical grounds by Dyson \cite{dyson:pert}. There is a widespread
belief that general perturbative series for quantum field theory and
many-body physics do not converge but are to be viewed as {\em
asymptotic} series, meaning that their lowest-order terms are a good
approximation to the quantity of interest (while inclusion of
higher-order terms may actually give a worse result, see e.g.
\cite{book:reed&simon}.)

%SBB: here I introduce a term "spatially sparse Hamiltonian" by analogy with sparse graphs.
%Maybe a better term would just "sparse Hamiltonian" because we don't consider any particular
%spatial arrangement of qubits.
In this Letter we make these beliefs more rigorous by developing a
formalism that justifies application of perturbation theory in the
regime  $||V|| \gg \Delta$. This formalism is applicable to
many-body Hamiltonians that possess certain locality properties. A
Hamiltonian $H$ describing a system of $n$ qubits will be called
{\it $k$-local} if and only if it is represented as a sum of local
interactions $H=\sum_i H_i$ such that each operator $H_i$ acts on
some subset of $k$ or less qubits. In addition, the Hamiltonians in
this paper will have the important property that each qubit occurs
in at most a constant, say $m=O(1)$, number of terms $H_i$. The
constraint says that for growing system size, the number of
interactions in which any one qubit participates does not grow, but
stays constant. Such a condition is generically fulfilled for
Hamiltonians with shoft-range interactions studied in physics. An
example is the standard (2-local) Heisenberg Hamiltonian on a
lattice $H=-\sum_{i,j} J(r_{ij})(X_i X_j+Y_i Y_j+Z_i Z_j)$; the fact
that the coupling $J(r_{ij})$ is bounded-range, $J(r_{ij}
> r) \approx 0$, ensures that each spin is acted upon by a
constant number of terms. Note that our results are not restricted
to Hamiltonians on lattices, but also hold for general networks of
interactions such as expander graphs \cite{Vazirani}.

The {\it interaction strength} of a $k$-local Hamiltonian $H=\sum_i H_i$ is defined
as the largest norm of the local interactions, $J=\max_i \|H_i\|$.
Throughout this paper we assume that $k$ is a constant independent
of $n$.

Our main result is a rigorous bound on the error that is made when a
perturbative expansion for the effective Hamiltonian is truncated at
a low order (the second or the third) in the regime where the {\em
interaction strength} of $V$ is small compared to $\Delta$ but
$||V|| \gg \Delta$. 
This bound allows us to prove the following result.
%SBB: in the previous version the error \epsilon has a dimension of energy. I think it is better
%to define an error as a dimensionless parameter, otherwise it is difficult to say whether an error
%big or small.
% BMT: I agree
\begin{theorem}
\label{thm:theorem3} Let $H_{\rm target}$ be a $k$-local Hamiltonian
acting on $n$ qubits with interaction strength $J$. For any fixed
precision $\epsilon$ one can construct a $2$-local simulator
Hamiltonian $H$ acting on $O(n)$ qubits with interaction strength
$O(J)$ such that the ground state energy of $H$ approximates the
ground state energy of $H_{\rm target}$ with an absolute error at
most $\epsilon Jn$.
\end{theorem}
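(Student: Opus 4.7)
The plan is to construct $H$ by iteratively reducing the locality of $H_{\rm target}$, one step at a time, using perturbative gadgets in the spirit of~\cite{kkr:hamsiam,OT:qma}, but analyzed via the bounded-strength Schrieffer-Wolff bounds developed in the body of this Letter. First, I exploit the bounded-degree structure: $H_{\rm target}=\sum_i H_i$ consists of $O(n)$ terms, each supported on at most $k$ qubits. For each such term I introduce a fresh ancilla register whose unperturbed Hamiltonian $H_{0,i}$ has a gap $\Delta$ separating an ``ancilla in ground state'' subspace from the rest. I then choose an $O(1)$-local perturbation $V_i$ of strength $O(J)$, depending only on $\epsilon$, $k$ and $J$, so that the leading nonvanishing term of the Schrieffer-Wolff series for $H_{0,i}+V_i$ reproduces $H_i$ (up to an additive constant) on the low-energy subspace. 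One pass over the list reduces the locality parameter by one; after $k-2$ passes, $H$ is $2$-local. Since the original terms have bounded degree, only $O(n)$ ancillas enter the construction and $H$ acts on $O(n)$ qubits.

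The construction hinges on the fact that, although the total perturbation $V=\sum_i V_i$ has operator norm of order $n$, its \emph{interaction strength} $\max_i \|V_i\|$ remains $O(J)$, independent of $n$. A Kempe-Kitaev-Regev-style analysis would require $\Delta$ to scale at least linearly in $n$, forcing the simulator couplings to scale with $n$; instead, the bounded-strength bound proved earlier permits fixing $\Delta$ as a function of $\epsilon$, $k$, and $J$ only. In particular the interaction strength of $H=H_0+V$ stays $O(J)$ uniformly in $n$, as the theorem demands. Per gadget, the effective Hamiltonian differs from the intended local interaction by at most $O(\epsilon J)$; since distinct gadgets employ disjoint ancillas, these per-gadget errors add essentially independently, producing a ground-state energy error of $O(\epsilon J n)$ as claimed.

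The main obstacle is to prove that distinct gadgets really do decouple, despite the fact that every order of the Schrieffer-Wolff expansion contains terms touching many gadgets simultaneously. A naive operator-norm bound at order $p$ would grow roughly as $n^p$, overwhelming the gap and destroying convergence. The bounded-strength estimate from the first part of the paper is precisely what rescues the argument: it localizes each term of the effective series on the joint support of the participating $V_i$, so terms involving pairwise disjoint ancillas contribute only through the intended gadget action, and terms coupling $r$ distinct gadgets are suppressed by $\Delta^{-r}$. Summing over all gadget combinations then yields an aggregate $O(\epsilon J n)$ error rather than an $O(\epsilon J\cdot 2^n)$ catastrophe. Combining this with the exact Schrieffer-Wolff equivalence between the low-lying spectrum of $H$ and that of $H_{\rm eff}$ gives the theorem.
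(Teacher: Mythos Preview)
Your high-level architecture matches the paper's, but two load-bearing steps are not actually justified.  First, you close by invoking an ``exact Schrieffer--Wolff equivalence between the low-lying spectrum of $H$ and that of $H_{\rm eff}$.''  That equivalence is only available when $\|V\|<\Delta/2$, which is precisely the regime you are trying to escape; here $\|V\|=O(nJ)\gg\Delta$.  The paper never appeals to an exact global Schrieffer--Wolff.  Instead it uses a \emph{truncated} generator $S=\sum_u S^u$ and proves the two inequalities separately: the upper bound $\lambda(H)\le\lambda(P\,e^S H e^{-S}\,P)=\lambda(H_{\rm target})+O(\epsilon nJ)$ is variational, while the lower bound comes from a per-gadget operator inequality $H^u\ge I^u\otimes H^u_{\rm target}+O(\epsilon J)$ (using that the $Q$-block carries the large penalty $\Delta$ and the off-diagonal blocks are small), which sums to $H\ge I\otimes H_{\rm target}+O(\epsilon nJ)$.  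Your sketch contains neither half of this two-sided argument.

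Second, your decoupling claim is too optimistic.  Disjoint ancillas do \emph{not} make gadgets independent: distinct gadgets overlap on the system qubits, so commutators like $[S^v,V^u]$ with $u\ne v$ are generically nonzero, and the asserted ``$\Delta^{-r}$'' suppression is not what controls them.  The paper's control of cross-gadget terms rests on two structural facts you do not mention: (i) each $S^u$ flips the mediator qubit $u$, so any term sandwiched by $P$ with an odd number of $S^u$'s on a given mediator vanishes; and (ii) $[P^u,H^v]=[P^u,S^v]=[H_0^u,S^v]=0$ for $u\ne v$.  Together with Lemmas~1 and~2 these force almost all cross terms in $P\,e^S H e^{-S}\,P$ to zero and leave only $O(n)$ surviving terms of the form $P\,[S^v,[S^v,V^u_{\rm extra}]]\,P$, each of size $O(\epsilon J)$ (or smaller).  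Without identifying this parity/commutation mechanism, your ``terms coupling $r$ gadgets are suppressed by $\Delta^{-r}$'' does not yield a finite bound, because the number of such terms also grows combinatorially.
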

%SBB: it is better to state our results first and then argue why they are significant
Let us comment on a significance of Theorem \ref{thm:theorem3}.
First of all, it eliminates the essential shortcoming of the
earlier
constructions~\cite{kkr:hamsiam,OT:qma,BDOT:stoq,JF:gadgets,ver,Love}
mentioned above, namely, the unphysical scaling of the interaction
strength in the simulator Hamiltonian $H$.  In our construction
the interaction strength of $H$ is bounded by a constant
(depending only on $k$, $\epsilon$, and $J$) which makes it more
physical and, in principle, implementable in a lab. The price we
pay for this improvement is that  we are able to reproduce the
ground state energy only  up to an {\em extensive} error $\epsilon
J n$.  For all realistic physical Hamiltonians the ground-state
energy itself is generically proportional to $nJ$, and thus the
{\it relative error} can be made arbitrarily small. Secondly,
estimating the ground-state energy of a $k$-local Hamiltonian up
to some sufficiently small constant relative error is known to be
an NP-hard problem even for classical (diagonal in the
$|0\ra,|1\ra$ basis) Hamiltonians. This hardness of approximation
follows from the PCP theorem (PCP=probabilistically checkable
proofs), see Chapter~29 of~\cite{Vazirani} for a review. Proving
an analogous  hardness of approximation result for quantum
Hamiltonians is a widely anticipated development, a so called
``quantum PCP theorem", see Section~5 in~\cite{AGIK}. If true, it
would imply that reproducing the ground-state energy up to some
sufficiently small relative error (which is what our simulation
achieves) is enough to capture the computational hardness of the
target Hamiltonian. We also believe that  the simulation results
reported in this Letter will be relevant in actually proving such
a quantum PCP theorem. Finally, the techniques we develop may be
relevant in the context of adiabatic quantum computation where the
simulator Hamiltonian must be capable of reproducing ground-state
expectation values of local observables, see~\cite{OT:qma,JFS:ft}.

%SBB: since the idea of using gadgets is not new, while the SW-transformation is something entirely new,
%I think we should put more emphasis on it rather than on gadgets
% BMT: I agree

The proof of Theorem~1  relies on a new way of deriving an
effective Hamiltonian, which we call the {\em Schrieffer-Wolff
transformation} after its use in many-body physics~\cite{SW}.
Specifically, given an unperturbed Hamiltonian $H_0$, projectors
$P$, $Q$ onto the low-energy and the high-energy subspaces of
$H_0$, and a perturbation $V$, the Schrieffer-Wolff transformation
is a
% BMT: SW transformation is the unitary trafo e^S in the literature, I believe
% and SW is not S itself
unitary operator $e^S$ with anti-hermitian $S$ ($S=-S^{\dagger}$)
such that (i) $S$ is a block-off-diagonal operator, namely,
$PSP=0$ and $QSQ=0$; (ii) the transformed Hamiltonian $e^S\,
(H_0+V)\, e^{-S}$ is a block-diagonal operator, namely, $P\, e^S
(H_0+V)e^{-S}\, Q=0$. Given such an operator $S$, one defines the
effective Hamiltonian on the low-energy subspace as $H_{\rm
eff}=P\,   e^S (H_0+V)e^{-S}\, P$. A detailed exposition of the
Schrieffer-Wolff formalism will appear in~\cite{BDLT:pert_gen}.

Let us briefly sketch the proof of the theorem. The  simulator
Hamiltonian $H$ is constructed using {\em perturbation gadgets}
introduced in \cite{kkr:hamsiam,OT:qma}. ``Gadget" is a technical
term used broadly in theoretical computer science; in the present
application, a gadget is simply an extra {\it mediator qubit}, and
a Hamiltonian coupling the mediator qubit with some small subset
of system qubits. For every mediator qubit $u$ we define a
projector onto its low-energy subspace $P^u=|0\ra\la 0|_u$ and its
high-energy subspace $Q^u=|1\ra\la 1|_u$. The purpose of a gadget
is to simulate some particular $k$-body interaction $H_{\rm
target}^u$ in the decomposition $H_{\rm target} \equiv \sum_u
H_{\rm target}^u+H_{\rm else}$. Here $H_{\rm else}$ are additional
terms in the target Hamiltonian that we do not wish to treat using
perturbation theory (since they are already 2-local, for example).

Such simulation is achieved by applying perturbation theory to each
gadget individually.  A gadget's simulator Hamiltonian is

$H^u=H_0^u+V^u$, where $H_0^u=\Delta\, Q^u$ penalizes the mediator
qubit for being in the state $|1\ra$, and $V^u$ is a perturbation.
With the proper choice of $V^u$ the effective Hamiltonian on the
low-energy subspace, in which the mediator qubit $u$ is in the state
$|0\ra$, approximates $H_{\rm target}^u$ with an error $\epsilon$.
Furthermore, this effective Hamiltonian can be obtained from $H^u$
via an approximate Schrieffer-Wolff transformation, that is, $H_{\rm
target}^u=P^u \, e^{S^u}\, (H_0^u+V^u)\, e^{-S^u}\, P^u + O(\epsilon
J)$ for some anti-hermitian $S^u$ satisfying $P^u S^u P^u=Q^u S^u
Q^u=0$.

Combining the local gadgets together  we get as a candidate for the
simulator Hamiltonian $H=H_0+V+H_{\rm else}$ with $H_0=\sum_u H_0^u$
and $V=\sum_u V^u$. Let $\lambda(H)$ and $\lambda(H_{\rm target})$
be the ground-state energy of $H$ and $H_{\rm target}$ respectively.
We prove that $\lambda(H)$ approximates $\lambda(H_{\rm target})$
with a small extensive error by constructing a global unitary
transformation $e^S$ mapping $H$ to $H_{\rm target}$ (with a small
extensive error), that is, $H_{\rm target}=P\, e^S H e^{-S}\, P +
O(\epsilon n J)$ where $P=\bigotimes_u P^u$ projects onto the
subspace in which every mediator qubit is in the state $|0\ra$.
Given such a transformation one immediately gets an upper bound
$\lambda(H)=\lambda(e^S H e^{-S})\le \lambda(P\, e^S H e^{-S}\, P) =
\lambda(H_{\rm target}) + O(\epsilon n J)$. Here we have taken into
account that restricting a Hamiltonian on a subspace can only
increase its ground-state energy~\footnote{By abuse of notation,
$P\, e^S H e^{-S}\, P$ denotes the restriction of the operator $e^S
H e^{-S}$ onto the subspace $P$.}. Making a natural choice $S=\sum_u
S^u$ we prove that $P\, e^S H e^{-S}\, P$ contains the desired term
$H_{\rm target}$ {\it and} some cross-gadget terms where $S^u$ acts
on $H_0^v+V^v$ with $u\ne v$.  Using the 'independence' properties
of the gadgets and the block-off-diagonality of $S$ we are able to
show that the contribution of these cross-gadget terms are small
enough to be absorbed into the error term $O(n\epsilon)$,  see
Section~II.

In order to prove a matching lower bound, consider the transformed
Hamiltonian $\tilde{H}^u=e^{S_u} H^u e^{-S_u}$, where
$H^u=H_0^u+V^u$. We prove an operator inequality $\tilde{H}^u \ge
I^u\otimes H_{\rm target}^u + O(\epsilon J)$.
 Here $I^u$ is the identity operator acting on the mediator qubit
$u$ and  $O(\epsilon J)$ stands for some operator with norm
$O(\epsilon J)$. Intuitively one should expect this inequality to
be true since the $P$-block of $\tilde{H}^u$ approximates $H_{\rm
target}^u$ with an error $O(\epsilon J)$, the $Q$-block of
$\tilde{H}^u$ contains a large energy penalty $\Delta$, and the
off-diagonal blocks $P \tilde{H}^u Q$ are small by the definition
of the Schrieffer-Wolff transformation. Using the unitarity of
$e^{S^u}$ and smallness of  $S^u$ we transform the above
inequality into $H^u \ge I^u\otimes H_{\rm target}^u + O(\epsilon
J)$ which implies $H\ge I\otimes H_{\rm target} + O(n\epsilon J)$
and thus gives $\lambda(H)\ge \lambda(H_{\rm target}) +
O(n\epsilon J)$. These arguments are filled in at the beginning of
Section~II.

%SBB: the "repeated use of the gadgets" is expanded
Our results will be stated for two different mappings $H_{\rm
target} \rightarrow H$ corresponding to two different gadgets, the
one reducing the locality parameter $k$  by a factor of $2$, and
the other reducing $k=3$ to $k=2$. By composing these mappings we
arrive at Theorem \ref{thm:theorem3}.

%SBB: This paragraph could be left out in the PRL version
% perturbation gadget -> perturbative gadget ?
% BMT: i prefer perturbation gadget, we do not perturb the gadget, it is a gadget that uses perturbation theory

Before proceeding with the details, let us make a few remarks about
generalizations of this technique. We expect that similar results
can be obtained for other perturbation gadgets in the literature
(see e.g. \cite{kkr:hamsiam,JF:gadgets,kitaev:anyon_pert}), since
these gadgets are all {\em designed} to work independently, i.e. a
term in the target Hamiltonian is replaced by some local perturbed
Hamiltonian $H$ and cross-gadgets terms should have small
contributions in the perturbation expansion. For Hamiltonians $H$ of
direct interest in many-body physics, we expect that it will also be
possible to identify a gadget sub-structure $H=\sum_u H^u$ such that
$H^u$ gives rise to $H_{\rm target}^u$ {\em and} cross-gadget
contributions are small. For generic perturbed Hamiltonians,
cross-gadget terms may not be small, which implies that the effect
of the perturbations must be analyzed globally. We will consider
such an analysis in a future paper \cite{BDLT:pert_gen}.

Now we state two Lemmas used in the proof. The two Lemmas together
can be regarded as an infinitesimal version of the Lieb-Robinson
bound that governs time evolution of a local observable under a
local Hamiltonian, see e.g.~\cite{HK:liebrob,BHV:liebrob}.
\begin{lemma}
\label{prop:LR1} Let $S$ be an anti-hermitian operator. Define a
superoperator $L$ such that $L(X)=[S,X]$ and $L^0[X]=X$. For any
operator $H$ define $r_0(H)=\| e^S H e^{-S} \| =\| H\|$,
$r_1(H)=\| e^S H e^{-S} - H\|$, and \be r_k(H)=\|\, e^S H e^{-S} -
\sum_{p=0}^{k-1} \frac1{p!}\, L^p(H) \, \|, \quad k\ge 2, \ee
where $\|\cdot \|$ is the operator norm. Then for all $k\ge 0$ one
has \be r_k(H)\le \frac1{k!}\, \|\, L^k(H)\, \|. \ee
\end{lemma}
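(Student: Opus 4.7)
The plan is to view $e^S H e^{-S}$ as the value at $t=1$ of the one-parameter family $f(t) \eqdef e^{tS} H e^{-tS}$ and obtain $r_k(H)$ as the Taylor remainder of $f$ at $t=0$ expanded to order $k-1$. The key point that makes this bound work (and gives $\frac{1}{k!}\|L^k(H)\|$ rather than the much weaker $\sum_{p\ge k}\frac{1}{p!}\|L^p(H)\|$ one gets from brute-force term-by-term bounding of the BCH series) is the exact integral form of the Taylor remainder combined with unitarity of $e^{tS}$.

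First, I would compute the derivatives of $f$. Differentiating $f(t)=e^{tS} H e^{-tS}$ gives $f'(t) = [S,f(t)] = L(f(t))$, so by induction $f^{(p)}(t) = L^p(f(t))$. Since $S$ commutes with $e^{tS}$, the superoperator $L$ commutes with conjugation by $e^{tS}$, so $L^p(f(t)) = e^{tS}\, L^p(H)\, e^{-tS}$. In particular, $f^{(p)}(0) = L^p(H)$.

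Next, I would apply Taylor's theorem with integral remainder to $f$ between $0$ and $1$:
\be
f(1) = \sum_{p=0}^{k-1} \frac{1}{p!}\, L^p(H) + \int_0^1 \frac{(1-t)^{k-1}}{(k-1)!}\, f^{(k)}(t)\, dt.
\ee
Substituting the expression for $f^{(k)}(t)$ computed above, the remainder becomes
\be
e^S H e^{-S} - \sum_{p=0}^{k-1} \frac{1}{p!}\, L^p(H) = \int_0^1 \frac{(1-t)^{k-1}}{(k-1)!}\, e^{tS}\, L^k(H)\, e^{-tS}\, dt.
\ee

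Finally, the norm bound. Because $S$ is anti-hermitian, $e^{tS}$ is unitary for every real $t$, so $\|e^{tS}\, L^k(H)\, e^{-tS}\| = \|L^k(H)\|$. Taking operator norm inside the integral and using $\int_0^1 (1-t)^{k-1}/(k-1)!\, dt = 1/k!$ yields $r_k(H)\le \|L^k(H)\|/k!$, which is the claim. The cases $k=0,1$ are checked directly (for $k=0$ by unitarity, for $k=1$ by the same integral representation without any expansion terms).

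There is no real obstacle here: the only ``trick'' is resisting the temptation to expand $e^{\pm S}$ as power series and bound the tail termwise. The combination of the exact Lagrange/integral remainder with the norm-preservation property of conjugation by a unitary does all the work, and the constant $1/k!$ emerges from the elementary beta-integral $\int_0^1 (1-t)^{k-1}\,dt = 1/k$.
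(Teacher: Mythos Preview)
Your proof is correct. Both your argument and the paper's hinge on the same two ingredients: the one-parameter family $f(t)=e^{tS}He^{-tS}$ and the fact that conjugation by the unitary $e^{tS}$ preserves the operator norm. The difference is in execution. The paper proceeds by a differential inequality: it bounds the increment $r_k(H,t+\delta t)-r_k(H,t)$ by $\delta t\, r_{k-1}(L(H),t)$, integrates to obtain the recursion $r_k(H,s)\le\int_0^s r_{k-1}(L(H),t)\,dt$, and then unwinds the recursion to get $r_k(H,t)\le t^k\|L^k(H)\|/k!$. This is, in effect, a from-scratch derivation of the Lagrange remainder bound via repeated integration. You instead invoke the integral form of Taylor's remainder directly, so the recursion is replaced by a single identity and the constant $1/k!$ drops out of the elementary integral $\int_0^1(1-t)^{k-1}\,dt$. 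Your route is shorter and more transparent; the paper's route has the minor virtue of being self-contained (it does not presuppose the reader knows the integral remainder formula for operator-valued functions), but the two arguments are equivalent in content.
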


% BMT: changed \Delta_k into r_k since we already use \Delta

Using this Lemma one can show that
\begin{lemma}
\label{prop:LR2} Let $S$ and $H$ be any $O(1)$-local operators
acting on $n$ qubits with an interaction strengths $J_S$ and $J_H$
respectively. Let each qubit be acted upon non-trivially by $O(1)$
terms of $H$ and $S$. Then for any $k=O(1)$ one has \be
\label{LR2} \|\, L^k(H) \, \| =O(n \cdot J_S^k J_H). \ee
\end{lemma}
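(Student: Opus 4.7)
The plan is to exploit the locality structure of the commutator $L(X) = [S,X]$ and show by induction on $k$ that $L^k(H)$ can be written as a sum of $O(n)$ local operators, each of norm $O(J_S^k J_H)$, after which the bound follows from the triangle inequality.

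Concretely, I would write $H = \sum_j H_j$ and $S = \sum_i S_i$ with $\|H_j\| \le J_H$, $\|S_i\| \le J_S$, each local term supported on at most $k_H$ (resp.\ $k_S$) qubits, and each qubit appearing in at most $m = O(1)$ of the $H_j$ and at most $m = O(1)$ of the $S_i$. In particular the total number of terms satisfies $N_0 \le m n$. The inductive claim I would maintain is that for every $p \ge 0$ we can expand
\begin{equation}
L^p(H) = \sum_{\alpha=1}^{N_p} T_\alpha^{(p)},
\end{equation}
where each $T_\alpha^{(p)}$ is supported on at most $s_p := k_H + p\,k_S$ qubits and satisfies $\|T_\alpha^{(p)}\| \le 2^p J_S^p J_H$, while the number of terms is bounded by $N_p \le N_0 \cdot \prod_{q<p} (m\, s_q)$.

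The inductive step is straightforward: $L^{p+1}(H) = \sum_{i,\alpha}[S_i, T_\alpha^{(p)}]$, and $[S_i, T_\alpha^{(p)}] = 0$ unless supp$(S_i)$ overlaps supp$(T_\alpha^{(p)})$. For each fixed $\alpha$, the number of $i$'s producing a nontrivial commutator is at most $m\,s_p$, since supp$(T_\alpha^{(p)})$ has at most $s_p$ qubits and each is in at most $m$ of the $S_i$. This yields the claimed recursion $N_{p+1} \le N_p \cdot m\, s_p$, while the support bound $s_{p+1} \le s_p + k_S$ and the norm bound $\|[S_i,T_\alpha^{(p)}]\| \le 2\|S_i\|\,\|T_\alpha^{(p)}\| \le 2^{p+1}J_S^{p+1}J_H$ are immediate.

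Putting the three invariants together and using $k = O(1)$, $k_H,k_S,m = O(1)$, we get $s_p = O(1)$ and $N_k \le N_0 \cdot C_k = O(n)$ for a constant $C_k$ depending only on $k$. The triangle inequality then gives
\begin{equation}
\|L^k(H)\| \le N_k \cdot \max_\alpha \|T_\alpha^{(k)}\| \le O(n) \cdot 2^k J_S^k J_H = O(n \cdot J_S^k J_H),
\end{equation}
which is the desired bound. The only step that requires any real care is verifying that the recursion for $N_p$ remains $O(n)$ — this is precisely where the bounded-degree hypothesis (each qubit touched by $O(1)$ local terms) is essential, since without it the overlap count in the inductive step would grow with $n$ and destroy the linear scaling.
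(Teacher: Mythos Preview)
Your proof is correct and follows essentially the same approach as the paper: decompose $S$ and $H$ into local terms, count the nonzero nested commutators $[S_{i_k},[\ldots,[S_{i_1},H_j]\ldots]]$ as $O(n)$ using the bounded-degree hypothesis and the fact that each such commutator stays $O(1)$-local, bound each by $2^k J_S^k J_H$, and apply the triangle inequality. Your explicit recursion for $N_p$ and $s_p$ simply makes precise what the paper compresses into one sentence.
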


The proofs are rather elementary and can be found in the Appendix.

\section{I. The Gadgets}
We shall use two types of gadgets proposed in~\cite{OT:qma},
namely, the {\it subdivision} gadget and the {\it 3-to-2-local}
gadget. The former will be used to break $k$-local interactions
down to $3$-local interactions while the latter breaks $3$-local
interactions up into $2$-local interactions.  A notational
comment: in this section, all Hamiltonian operators, and
Schrieffer-Wolff operators $S$, refer to a single gadget $u$, and
in the remainder of the paper these operators appear with the
label $u$; in this section only, for economy, we omit this label.

\noindent {\bf Subdivision Gadget}. Let the target Hamiltonian be a single $k$-body interaction
%SBB: it is easier to assume that $A$, $B$ have norm $O(1)$ and add an overall factor $J$
%+ some other minor changes below.
%to make notations consistent I changed P\otimes H_{target} to H_{target}
$H_{\rm target}=J\, A B$ where $A$, $B$ act on non-overlapping
subsets of  $\lceil k/2\rceil $ or less qubits and $\|A\|,\|B\|\le
1$. Introduce one mediator qubit $u$, choose a parameter $\Delta
\gg J$,  and define the simulator Hamiltonian $H=H_0+V$ with \bea
 \label{H}
  H_0=\Delta |1\ra\la 1|_u, &
V=\sqrt{\Delta J/2}\, X_u \otimes (-A+B) +V_{\rm extra}. \eea
 Here $V_{\rm extra}=(J/2)(A^2+B^2)$ acts trivially on the mediator qubit.
 Note that  $H$ contains only
 $(\lceil k/2 \rceil+1)$-body interactions.
The  purpose of the term $V$  is to induce transitions $|0\ra_u\to |1\ra_u\to |0\ra_u$
in the second order of perturbation theory, see Fig.~1,
such that the corresponding effective Hamiltonian is proportional to $(-A+B)^2$ containing the
desired term $AB$ and  unwanted terms $A^2$, $B^2$ which we cancel by $V_{\rm extra}$.
Next we define
 \be
  \label{eq:s_sub}
  S=-i\sqrt{\frac{J}{2\Delta}}\,  {Y_u} \otimes (-A+B).
 \ee
One can check that $S$ is the Schrieffer-Wolff transformation
truncated at the first order,  i.e. $S=S_1$ where $S_1$ is defined
in Eq.~(\ref{S1S2S3}) in the Appendix. We can calculate $P e^S H
e^{-S}P$ using the expansion in Lemma \ref{prop:LR1}. A
straightforward calculation shows that
%SBB: I verified it in Matlab; the actual error is typically smaller by several orders of magnitude !
%probably because the leading term in the error is P[S,[S,[S,H]]]P is zero
%and because [S,[S,V_{extra}]] is zero so, in principle,  the bounds given below can be improved
\bea
 \label{transformedH2}
e^{S}H e^{-S}=\left( H + [S,H] + \frac12
[S,[S,H]]\right)+O(J^{3/2}\Delta^{-1/2}) =\left[ \ba{cc} H_{\rm target} & 0 \\
0 & \Delta\, I + O(J) \\ \ea \right] +
O(J^{\frac32}\Delta^{-\frac12}), \eea where we used that
%SBB: it seems that  the commutator with V_{extra} is zero
$[S,V_{\rm extra}]=0$. The upper and lower blocks correspond to the
subspaces $P$ and $Q=I-P$ respectively.  Thus, $P e^S H e^{-S}P$ is
close to $H_{\rm target}=J AB$, as desired; the error can be made
$O(\epsilon J)$ by choosing $\Delta=J\epsilon^{-2}$.

\noindent {\bf 3-to-2-local Gadget}. Let the target Hamiltonian be
a single 3-body interaction, $H_{\rm target}=J\, A BC$, where
$A,B,C$ are one-qubit operators acting on different qubits and
$\|A\|,\|B\|,\|C\|\le 1$. Introduce one mediator qubit $u$, choose
$\Delta\gg J$ and define  the simulator Hamiltonian $H=H_0+V$ with
$H_0=\Delta\,  |1\ra\la 1|_u$, \be \label{V:3to2}
V=V_d+V_{od}+V_{\rm extra}, \quad V_d=-\Delta^{\frac23}
J^{\frac13} |1\ra\la 1|_u \otimes C, \quad
V_{od}=\frac{\Delta^{\frac23} J^{\frac13}}{\sqrt{2}}\, X_u\otimes
(-A+B), \ee and  $V_{\rm
extra}=\Delta^{1/3}J^{2/3}(-A+B)^2/2+J(A^2+B^2)C/2$. Note that $H$
contains only $2$-body interactions. The  purpose of the term $V$
is to induce  transitions $|0\ra_u\to |1\ra_u \to |1\ra_u \to
|0\ra_u$ in the third order of perturbation theory, see Fig.~1,
such that the corresponding contribution to the effective
Hamiltonian is proportional to $(-A+B)^2C$ which coincides with
$ABC$ up to some unwanted terms which are canceled by $V_{\rm
extra}$. We define
 \be
S=\frac{-ix }{\sqrt{2}}  \, Y_u \otimes
(-A+B)\left[I+x C+x^2 C^2-
\frac{2 x^2}{3} (-A+B)^2\right], \quad x\equiv \left( \frac{J}{\Delta}\right)^{\frac13}.
\label{eq:s_3to2}
 \ee
One can check that $S$ is the Schrieffer-Wolff transformation
truncated at the third order, i.e., $S=S_1+S_2+S_3$, see
Eq.~(\ref{S1S2S3}) in the Appendix. We calculate the effective
Hamiltonian $Pe^S H e^{-S} P$. Let us first estimate an error
resulting from cutting off the expansion, see
Lemma~\ref{prop:LR1}. Recalling that $L=[S,\cdot]$ one gets \be
\label{eq:3bound} \| P e^{S}H e^{-S}P-{P}\left(H + [S,H] + \frac12
[S,[S,H]]\right)P|| \leq \frac16 \| P\, L^3(H)\, P \| + \frac1{24}
\| L^4(H) \|. \ee We note that $P\, L^3(H)\, P=P\, L^3(V_{od})\,
P$ since each application of $S$ flips the mediator qubit and a
non-zero contribution comes only from the terms with an even
number of flips.
 Using a bound $\|S\|=O(x)$, see Eq.~(\ref{eq:s_3to2}),
one can upper-bound the r.h.s. in Eq.~(\ref{eq:3bound}) as $O(J^{4/3}\Delta^{-1/3})$. A direct but lengthy
calculation shows that
%SBB: I verified it in Matlab
%SBB: I found a mistake here: the block $Qe^S H e^{-S} Q$ has a contribution from $V_d$
%which is of order $\Delta^{2/3} J^{1/3}$ rather than $O(J)$.
%It does not change anything in the argument though.
\bea \label{transformedH3} e^S H e^{-S} & = & \left(H + [S,H] +
\frac12
[S,[S,H]]+\frac{1}{6}[S,[S,[S,H]]]\right)+O(J^{4/3}\Delta^{-1/3}) \nonumber \\
& = & \left[ \ba{cc} H_{\rm
target} & 0 \\ 0 & \Delta\, I + O(\Delta^{2/3} J^{1/3}) \\
\ea\right] + O(J^{4/3}\Delta^{-1/3}), \eea where $H_{\rm target}=J
ABC$, as desired.  The error $O(J^{4/3}\Delta^{-1/3})$ in $H_{\rm
target}$ can be made $O(\epsilon J)$ by choosing
$\Delta=J\epsilon^{-3}$.

\begin{figure}
\label{fig:gadgets}
\centerline{
\mbox{
 \includegraphics[height=2cm]{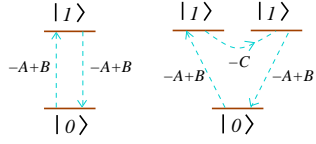}
 }}
\caption{Transitions induced by the perturbation $V$ for the
subdivision gadget (left) and the $3$-to-$2$-local gadget (right).}
\end{figure}

\section{II. Combining the gadgets together}
\label{sec:sw}

%SBB: I attempted to compress this section
Let $H_{\rm target}=\sum_u H_{\rm target}^u + H_{\rm else}$, where
each term $H_{\rm target}^u$ can be dealt with using one of the
gadgets described above. The simulator Hamiltonian is $H=\sum_u H^u
+ H_{\rm else}$ where $H^u$ is the simulator constructed for a
gadget $u$ as above. Using
Eqs.~(\ref{transformedH2},\ref{transformedH3}) one gets the
inequality $e^{S^u} H^u e^{-S^u} \ge I^u\otimes H_{\rm target}^u +
O(\epsilon J)$ which yields $H^u\ge e^{-S^u} ( I^u\otimes H_{\rm
target}^u) e^{S^u}+ O(\epsilon J)$. Applying Lemma~\ref{prop:LR1}
one gets $\| e^{-S^u} ( I^u\otimes H_{\rm target}^u) e^{S^u} -
I^u\otimes H_{\rm target}^u\|\le \|\, [S^u,I^u\otimes H_{\rm
target}^u]\, \| =O(\epsilon J)$, see
Eqs.~(\ref{eq:s_sub},\ref{eq:s_3to2}). It follows that $H^u\ge
I^u\otimes H_{\rm target}^u + O(\epsilon J)$. Summing up these
inequalities over all gadgets one arrives at $H\ge I\otimes H_{\rm
target} + O(\epsilon n J)$, where $I$ acts on the mediator qubits.
Thus \be \label{lower_bound} \lambda(H)\ge \lambda(H_{\rm target}) +
O(\epsilon n J). \ee

\noindent {\bf  Bounding Cross-Gadget Contributions.} Let
$P=\bigotimes_{u} P^{u}$ be the projector on the subspace in which
every mediator qubit is in the state $|0\ra$. Define $S=\sum_u S^u$
where  $S^u$ is constructed using
Eqs.~(\ref{eq:s_sub},\ref{eq:s_3to2}). Below we shall prove that \be
\label{small-cross-talk} \| P\, e^{S}\, H \, e^{-S}\, P - H_{\rm
target} \| =O(\epsilon n J). \ee Then one can  get an upper bound on
$\lambda(H)$ by  observing that \be \label{upper_bound}
\lambda(H)=\lambda( e^{S}\, H \, e^{-S} ) \le \lambda(P\, e^{S}\, H
\, e^{-S}\, P)=\lambda(H_{\rm target}) +O(\epsilon n J). \ee
Combining Eqs.~(\ref{lower_bound},\ref{upper_bound}) one gets
$|\lambda(H)-\lambda(H_{\rm target})|=O(\epsilon nJ)$ which is the
desired result. It remains to prove Eq.~(\ref{small-cross-talk}).
The idea is to bound the cross-gadget terms  using Lemma
\ref{prop:LR1} and Lemma \ref{prop:LR2}. Two important properties
(valid for both gadgets) will be used repeatedly in this argument:
(1) $S^u$  always flips a mediator qubit $u$; (2)   For $u \neq v$
one has $[P^u,H^v]=0$, $[P^u,S^v]=0$, $[H_0^u,S^v]=0$. This latter
property essentially captures the independent action of the local
gadgets.
% BMT added `latter property' to make clear that independence is captured by eq. (2)

Let us start with the subdivision gadget. Recall that we have chosen
$\Delta=J \epsilon^{-2}$. From Eq.~(\ref{eq:s_sub}) we can see that
$S$ is a $O(1)$-local operator with an interaction strength
$O(\epsilon)$. Applying Lemma~\ref{prop:LR1}  we get \be
\label{sub-cross-talk}
 \| Pe^S He^{-S} P- H_{\rm target}\| \le
  \| \, [S,H_{\rm else}] \, \| + \sum_u \| \, P\,( H^u + [S,H^u]+\frac12 [S,[S,H^u]] )\, P -H_{\rm target}^u \, \| +
 \frac1{6} ||L^3(H) \|,
\ee Lemma \ref{prop:LR2} gives $||[S,H_{\rm else}]||=O(\epsilon n
J)$ and $||L^3(H)||=O(\epsilon^3\Delta n)=O(\epsilon nJ)$.
Properties~(1),(2) above imply $P\, [S^v,H^u]\, P=0$ for $u\ne v$
and that $P\, [S^{u_1},[S^{u_2},H^v]]\, P=0$ unless  $u_1=u_2$.
Property~(2) implies that $P\, [S^{v},[S^{v},H^u_0]]\, P=0$ for
$u\ne v$ and thus
 the only non-zero cross-talk term could be
$P\, [S^{v},[S^v,V^u]]\, P$ where $u\ne v$. Using the explicit form
of $V^u$, see~Eq.~(\ref{H}), and property~(1) one concludes that
$P\, [S^{v},[S^v,V^u]]\, P=P\, [S^v,[S^v,V_{\rm extra}^u]]\, P$. By
construction of the individual gadgets, the contribution of the
``diagonal" terms (those in which $H^u$ and $S^u$ belong to the same
gadget) is $O(\epsilon nJ)$, see Eq.~(\ref{transformedH2}).
Combining these observations together we arrive at \be
\label{sub-cross-talk1} \| Pe^S He^{-S} P- H_{\rm target}\| \le
O(\epsilon nJ) + \sum_{u\ne v} \| \, P\, [S^v,[S^v,V_{\rm
extra}^u]]\, P \, \| \ee Now, note that for any $u$ there exist only
$O(1)$ mediator qubits $v$ such that $[S^v,V_{\rm extra}^u]\ne 0$.
Since $V_{\rm extra}^u$ has norm $O(J)$ the contribution of the
cross-talk terms is $O(n\epsilon^2 J)$ and we arrive at
Eq.~(\ref{small-cross-talk}).

Let us do a similar analysis for the 3-to-2-local gadget.  Recall
 that we have chosen $\Delta=J \epsilon^{-3}$. From Eq.~(\ref{eq:s_3to2}) we see that $S$ is a $O(1)$-local
operator with interaction strength $O(\epsilon)$. Applying
Lemma~\ref{prop:LR1} we get \bea \| \, P\, e^S He^{-S}\, P- H_{\rm
target}\|  &\leq &
 \|\,  [S,H_{\rm else}\, \| +
\sum_u \| \, P\,( H^u + [S,H^u]+\frac12 [S,[S,H^u]] )\, P -H_{\rm target}^u \, \| \nn \\
&& +  \frac1{3!}\, \| P L^3(H)P \| +\frac1{4!} \, \| L^4(H)\|.
\label{eq:3to2b} \eea Lemma \ref{prop:LR2} gives $ \|\,  [S,H_{\rm
else}\, \| =O(\epsilon nJ)$, $ \| L^4(H)\| =O(\epsilon^4 \Delta
n)=O(\epsilon nJ)$. Property~(1) implies $P L^3(H_0)P=0$. Taking
into account that $V$ has interaction strength $\Delta^{2/3}
J^{1/3}=\Delta \epsilon$, see Eq.~(\ref{V:3to2}), and applying
Lemma~\ref{prop:LR2} we get $ \| P L^3(H)P \|= \| P L^3(V)P
\|=O(n\epsilon^3 \Delta \epsilon) = O(\epsilon nJ)$. Repeating the
same arguments as for the subdivision gadget we conclude that the
only cross-talk terms contributing to Eq.~(\ref{eq:3to2b}) are
$P\, [S^v,[S^v,V^u]]\, P$ where $u\ne v$. Using the explicit form
of $V^u$, see Eq.~(\ref{V:3to2}), one concludes that $P\,
[S^v,[S^v,V^u]]\, P=P\, [S^v,[S^v,V^u_{\rm extra}]]\, P$ so we
again arrive at the bound Eq.~(\ref{sub-cross-talk1}). By
definition, $V_{\rm extra}^u$ has norm $O(\Delta^{1/3}
J^{2/3})=O(J\epsilon^{-1})$ and $S^v$ has norm $O(\epsilon)$.
Making use of the properties of $H_{\rm target}$ we conclude that
the contribution of the cross-talk terms is $O(n\epsilon J)$ and
we arrive at Eq.~(\ref{small-cross-talk}).

% BMT: paragraphs above on details for how to bound cross-talk are not very insightful

\section{III. Simulation overhead}
Let us estimate the overall increase of the interaction strength
associated with $m$ levels of simulation using the subdivision
gadget. Let $H_i$ be a Hamiltonian at a level $i$ of the simulation
such that $H_0=H_{\rm target}$ and $H_m$ is $3$-local. Accordingly,
$H_{\rm target}$ can be $k$-local where $k\sim 2^m$. The overall
number of mediator qubits one needs for the simulation is
$O(n2^{O(k)})$. For simplicity one can add all these qubits to the
system from the beginning by letting $H_{\rm target}$ to act
trivially on them. It allows us to assume that at every level of
simulation we have $n'=O(n2^{O(k)})$ qubits. Let $\lambda_i$ be the
ground state energy of $H_i$. Let $J_i$ be the interaction strength
of $H_i$. In order to make $|\lambda_{i+1}-\lambda_i|\sim n'\delta$
it suffices to choose $J_{i+1}=\delta^{-2} J_i^3$. Accordingly,
$\delta^{-1} J_m =(\delta^{-1} J_0)^{3^m}$ and thus $J_{m=O(\log
k)}\sim J_0^{poly(k)} \delta^{-poly(k)}$. Choosing
$\delta=2^{-O(k)}\epsilon$ one can make the overall error
$|\lambda_0-\lambda_m|\sim \epsilon n$. Assuming that $J_0=O(1)$ we
get $J_{\rm final}\sim (2\epsilon^{-1})^{poly(k)}$.

It may be possible to improve this exponential scaling by using a
direct $k$-to-2-local gadget.

%SBB: the term $\Delta |1\ra\la 1|$ can always be included into $H_{else}$
%at the next simulation level. May be one could improve the scaling by considering
%two different interaction strengths $J_i$ and $J_{else,i}$ at each level?

\section{Acknowledgements}
SB, DPD and BMT acknowledge support by DTO through ARO contract
number W911NF-04-C-0098. DL acknowledges support from the Swiss NSF.
% BMT daniel: acknowledgements, thanks?

\appendix

\section{Appendix}

%%%%%%%%%%%%%%%%
%%%%%%%%%%%%%%%%
\begin{proof}[Proof of Lemma~\ref{prop:LR1}]
%%%%%%%%%%%%%%%%
%%%%%%%%%%%%%%%%
For any real $t\ge 0$, let $H(t)=e^{St}\, H \, e^{-St}$. Define an
auxiliary quantity \be \label{Delta1} r_k(H,t)=\|\,
H(t)-\sum_{p=0}^{k-1} \frac{t^p}{p!} \, L^p (H)\, \|. \ee Let us
get an upper bound on the increment $r_k(H,t+\delta t)-r_k(H,t)$.
First of all notice that \be \label{aux_bound1} H(t+\delta
t)-H(t)=\delta t\, e^{St}\, L(H)\, e^{-St} + O((\delta t)^2). \ee
Secondly, for any $k\ge 1$ one has \be \label{aux_bound2}
\sum_{p=0}^{k-1} \frac{(t+\delta t)^p}{p!} \, L^p (H) -
\sum_{p=0}^{k-1} \frac{t^p}{p!} \, L^p (H)= \delta t
\sum_{p=0}^{k-2} \frac{t^p}{p!} L^p(L(H))+O((\delta t)^2). \ee (if
$k=1$ then there are no $O(\delta t)$ terms on the right hand
side). Applying the triangle inequality one gets \be
\label{increment} r_k(H,t+\delta t)-r_k(H,t)\le \delta t\,
r_{k-1}(L(H),t) + O((\delta t)^2). \ee Since $r_k(H,t)$ is a
continuous function of $t$, it is legitimate to add up the
inequalities Eq.~(\ref{increment})  for $t=0,\delta t,2\delta
t,\ldots,s$, take a limit  $\delta t\to 0$, and replace the
resulting sum by an integral, which yields \be \label{recursive}
r_k(H,s) \le \int_0^s dt\, r_{k-1}(L(H),t) \quad \mbox{if $k\ge
1$}, \quad \mbox{and} \quad r_0(H,s)=\|H\|. \ee Applying this
upper bound recursively and evaluating the integrals one arrives
at \be \label{final} r_{k}(H,t)\le \|L^{k}(H)\| \cdot
\frac{t^{k}}{(k)!} \quad \mbox{for all $k\ge 0$}. \ee Since
$r_k(H)=r_k(H,1)$ Eq. (\ref{final}) proves the lemma.
\end{proof}

\begin{proof}[Proof of Lemma~\ref{prop:LR2}]
Represent $S$ and $H$ as a sum of local operators
$S=\sum_{i=1}^{K'} S_i$ and $H=\sum_{j=1}^K H_j$ such that any
$S_i$, $H_j$ act on $O(1)$ qubits and any qubit is acted on by
$O(1)$ operators $S_i$, $H_j$. Let us use the term {\it elementary
commutator of order $k$} for a multiple commutator that involves
some $H_j$ and $k$ operators $S_i$. For example, $[S_1,[S_2,H_1]]$
is an elementary commutator of order $2$. Note that a commutator
of any $O(1)$-local operators is again a $O(1)$-local operator.
Therefore for any constant $k$ the number of non-zero elementary
commutators of order $k$ contributing to $L^k(H)$ can be bounded
as $O(n)$. Each elementary commutator of order $k$ has a norm at
most $2^k J_S^k \, J_H$. Applying the triangle inequality to $\|
L^k (H)\|$ we get Eq.~(\ref{LR2}).
\end{proof}

\section{Systematic Solution for the Schrieffer-Wolff transformation}
One can note that the equation $P\, e^S (H_0+\beta\, V)e^{-S}\, Q=0$
has a unique solution in terms of a formal Taylor series
$S=\sum_{p=1}^\infty S_p\ \beta^p$; the series coefficients $S_p$
can be straightforwardly computed. In the gadgets we have used an
approximate version of $S$ obtained by truncating the Taylor series
at the first (subdivision gadget) or the third order (3-to-2-local
gadget). Introducing operators $V_d=PVP+QVQ$, $V_{od}=PVQ+QVP$ and
linear maps $L_0=[H_0,\cdot]$, $L_p=[S_p,\cdot]$ one can derive that
\be \label{S1S2S3} S_1=L_0^{-1}(V_{od}), \quad S_2=L_0^{-1}L_1(V_d),
\quad S_3=-\frac13 \, L_0^{-1} L_1^3(H_0) + L_0^{-1} L_2 (V_d). \ee

%SBB: some corrections in the bibliography

% \bibliographystyle{hunsrt}
% \bibliography{refs}

\end{document}